\newcommand{\dfn}{\stackrel{\triangle}{=}}
\newcommand{\RomanNumeralCaps}[1]
    {\MakeUppercase{\romannumeral #1}}
\def\vv{\boldsymbol{v}}
\def\dv{\boldsymbol{d}}
\def\Uc{\mathcal{U}}
\def\Lc{\boldsymbol{\mathcal{L}}}
\def\Zc{\mathcal{Z}}
\def\Tau{\mathcal{T}}
\def\Tau{\mathcal{T}}
\def\dvlambda{\boldsymbol{d_\lambda}}
\newcommand\smallV{
  \mathchoice
    {{\scriptstyle\mathcal{V}}}
    {{\scriptstyle\mathcal{V}}}
    {{\scriptscriptstyle\mathcal{V}}}
    {\scalebox{.7}{$\scriptscriptstyle\mathcal{O}$}}
  }
\newcommand{\defeq}{\triangleq}
\newtheorem{theorem}{Theorem}
\newtheorem{lemma}{Lemma}
\newtheorem{corollary}{Corollary}
\newtheorem{remark}{Remark}
\title{Coded Caching with Shared Caches: \\Fundamental Limits with Uncoded Prefetching}
\author{Emanuele Parrinello, Ay\c{s}e \"{U}nsal and Petros~Elia
\thanks{The authors are with the Communication Systems Department at EURECOM, Sophia Antipolis, 06410, France (email: parrinel@eurecom.fr, unsal@eurecom.fr, elia@eurecom.fr). The work is supported by the European Research Council under the EU Horizon 2020 research and innovation program / ERC grant agreement no. 725929. (ERC project DUALITY)}
\thanks{This work is to appear in part in the proceedings of ITW 2018\nocite{MN14,WanTP15,YuMA16}. An extended version of this work can be found in \cite{PEU_arxiv_multi}.}
}
\begin{document}
\thispagestyle{plain}
\maketitle

\begin{abstract}
The work identifies the fundamental limits of coded caching when the $K$ receiving users share $\Lambda\leq K$ helper-caches, each assisting an arbitrary number of different users. The main result is the derivation of the exact optimal worst-case delivery time --- under the assumption of uncoded cache placement --- for any user-to-cache association profile where each such profile describes how many users are helped by each cache. This is achieved with a new information-theoretic converse that is based on index coding and which proves that a simple XOR-shrinking-and-removal coded-caching scheme is optimal irrespective of the user-to-cache association profile. All the results also apply directly to the related coded caching problem with multiple file requests.
\end{abstract}


\section{Introduction and System Model\label{sec:systemModel}}
In this work, we consider a basic broadcast configuration where a transmitting server has access to a library of $N$ files $W^{1},\dots,W^{N}$, each of size equal to one unit of `file', and where this transmitter wishes to communicate some of these files via a shared (bottleneck) broadcast link, to $K$ receiving users, each having access to one of $\Lambda\leq K$ helper nodes that will serve as caches of content from the library.
The communication process is split into $a)$ the cache-placement phase, $b)$ the user-to-cache assignment phase during which each user is assigned to a single cache, and $c)$ the delivery phase during which each user requests a single file independently and during which the transmitter aims to deliver these requested files, taking into consideration the cached content and the user-to-cache association.

\paragraph{Cache placement phase}
During this phase, helper nodes store content from the library without having knowledge of the users' requests. Each helper cache has size $M\leq N$ units of file, and no coding is applied to the content stored at the helper caches; this corresponds to the common case of \textit{uncoded cache placement}. The cache-placement algorithm is oblivious of the subsequent user-to-cache association $\mathcal{U}$.

\paragraph{User-to-cache association}
After the caches are filled, each user is assigned to exactly \emph{one} helper node/cache, from which it can download content at zero cost. Specifically, each cache $
\lambda = 1,2,\dots, \Lambda$, is assigned to a set of users $\mathcal{U}_\lambda$, and all these disjoint sets \[\mathcal{U}\dfn \{\mathcal{U}_1,\mathcal{U}_2,\dots ,\mathcal{U}_\Lambda\}\] form a partition of the set of users $\{1,2,\dots,K\}$, describing the overall association of the users to the caches.

This cache assignment is independent of the cache content and independent of the file requests to follow. We here consider any arbitrary user-to-cache association $\mathcal{U}$, thus allowing the results to reflect both an ability to choose/design the association, as well as to reflect possible association restrictions due to randomness or topology. Similarly, having the user-to-cache association being independent of the requested files, is meant to reflect the fact that such associations may not be able to vary as quickly as a user changes the requested content.

\paragraph{Content delivery}
The delivery phase commences when each user $k = 1,\dots,K$ requests from the transmitter, any \emph{one} file $W^{d_{k}}$, $d_{k}\in\{1,\dots,N\}$ out of the $N$ library files.
Upon notification of the entire \emph{demand vector} $\boldsymbol{d}=(d_1,d_2,\cdots,d_{K})\in\{1,\dots,N\}^K$, the transmitter aims to deliver the requested files, each to their intended receiver, and the objective is to design a \emph{caching and delivery scheme $\chi$} that does so with limited (delivery phase) duration $T$. The delivery algorithm is aware of $\mathcal{U}$.

\begin{figure}[t!]
\centering
\includegraphics[width=0.6\linewidth]{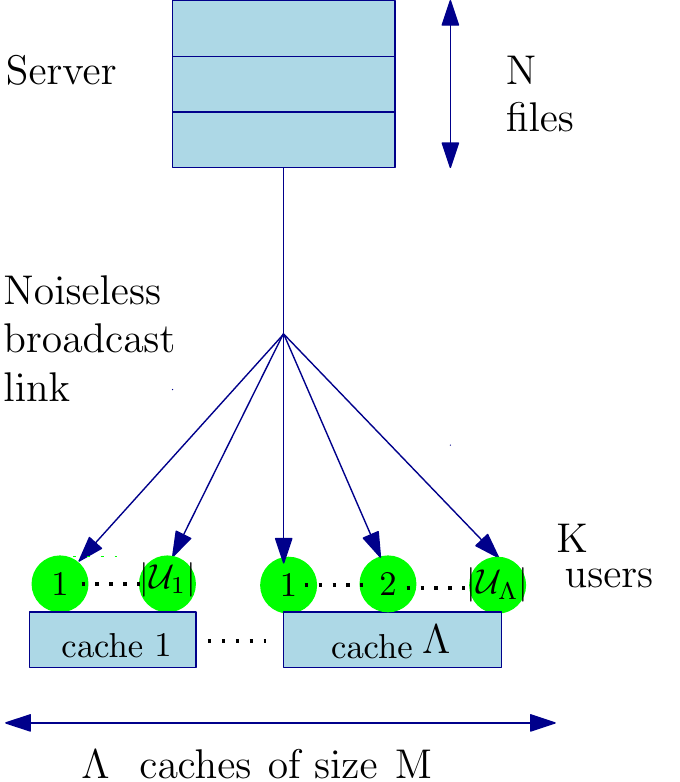}
    \label{system_pic}
  \caption{Shared-link broadcast channel with shared caches.}
\end{figure}

\paragraph{Performance measure}
As one can imagine, some user-to-cache association instances $\mathcal{U}$ allow for higher performance than others; for instance, one can suspect that more uniform profiles may be preferable. Part of the objective of this work is to explore the effect of such associations on the overall performance. Toward this, for any given $\mathcal{U}$, we consider the association \textit{profile} (sorted histogram)
$$\Lc=(\mathcal{L}_{1},\dots,\mathcal{L}_{\Lambda})$$ where $\mathcal{L}_{\lambda}$ is the number of users assigned to the $\lambda$-th \emph{most populated} helper node/cache\footnote{Here $\Lc$ is simply the vector of the cardinalities of $\mathcal{U}_\lambda, ~\forall\lambda\in\{1,\dots,\Lambda\}$, sorted in descending order. For example, $\mathcal{L}_{1}=6$ states that the highest number of users served by a single cache, is $6$.}. Naturally, $\sum_{\lambda=1}^\Lambda \mathcal{L}_{\lambda} = K$. Each profile $\Lc$ defines a class $\mathcal{U}_{\Lc}$ comprising all the user-to-cache associations $\mathcal{U}$ that share the same profile $\Lc$.

As in \cite{MN14}, $T$ is the number of time slots, per file served per user, needed to complete delivery of any file-request vector\footnote{The time scale is normalized such that one time slot corresponds to the optimal amount of time needed to send a single file from the transmitter to the receiver, had there been no caching and no interference.} $\dv$. We use $T(\mathcal{U},\dv,\chi)$ to define the delay required by some generic caching-and-delivery scheme $\chi$ to satisfy demand $\dv$ in the presence of a user-to-cache association described by $\mathcal{U}$. To capture the effect of the user-to-cache association, we will characterize the optimal worst-case delivery time associated to each class
\begin{equation}
T^*(\Lc)\defeq \min_{\chi} \max_{(\mathcal{U},\dv) \in (\mathcal{U}_{\Lc},\{1,\dots,N\}^K)} T(\mathcal{U},\dv,\chi).\label{eq:T*_def}
\end{equation}
Our interest is in the regime of $N\geq K$ where there are more files than users.

\paragraph{Context and related work}
Our work can be seen as an extension of the work in~\cite{MN14} which considered the uniform setting where $\Lambda=K$ (where $\Lc=(1,1,\dots,1)$), and which provided the breakthrough of coded caching that allowed for a worst-case delivery time of $T = \frac{K(1-\frac{M}{N})}{1+K\frac{M}{N}}$. The concept of coded caching has been adapted to a variety of settings, in different works that include \cite{ZhangE16b,BidokhtiWT16isit,GhorbelKY:16,ZE:17tit,SenguptaTS15,CaoTXL16,RoigTG17a,CaoTaoMultiAntenna18,PiovClerckISIT18} and many others. \nocite{lampiris2018lowCSIT}
Interestingly, under the assumption of uncoded cache placement where caches store uncoded subfiles, this performance --- for the case where each user has its own cache --- was proven in~\cite{WanTP15} (see also \cite{YuMA16}) to be exactly optimal.

The setting of coded caching with shared caches, was explored in \cite{MND13}, as well as in~\cite{Diggavi_IT} which considered a similar shared-cache setting as here --- under a uniform user-to-cache association where each cache serves an equal number of users --- and which proposed a coded caching scheme that was shown to perform to within a certain constant factor from the optimal.

In this context of coded caching with shared caches, we here explore the effect of user-to-cache association profiles, and how profile skewness affects performance. This aspect is crisply revealed here as a result of a novel scheme and an outer bound that jointly provide exact optimality results. This direction is motivated by the realistic constraints in assigning users to caches, where these constraints may be due to topology, cache capacity, and other factors.


\paragraph*{Paper outline}
The main results are presented in Section~\ref{sec:results}. Section \ref{sec:converse} presents the information-theoretic converse, while Section~\ref{sec:scheme} describes the coded caching scheme and presents an example. Finally Section~\ref{sec:discussion} draws some basic conclusions based on the obtained results.

\subsection{Notation}
We will use $\gamma \defeq \frac{M}{N} $ to denote the normalized cache size. We denote the cache content at helper node $\lambda = 1,2,\dots,\Lambda$ by $\Zc_\lambda$.
For $n$ denoting a positive integer, $[n]$ refers to the following set $[n]\triangleq \{1,2,\dots,n\}$, and $2^{[n]}$ denotes the power set of $[n]$. The expression $\alpha | \beta$ denotes that integer $\alpha$ divides  integer $\beta$. Permutation and binomial coefficients are denoted and defined by $P(n,k)\defeq  \frac{n!}{(n-k)!}$ and $\binom{n}{k}\defeq  \frac{n!}{(n-k)!k!}$, respectively.
For a set $\mathcal{A}$, $|\mathcal{A}|$ denotes its cardinality.
$\mathbb{N}$ represents the natural numbers. We denote the lower convex envelope of the points $\{(i, f(i)) | i \in [n]\cup \{0\}\}$ for some $n\in \mathbb{N}$ by $Conv(f(i))$. 
For $n\in \mathbb{N}$, we denote the symmetric group of all permutations of $[n]$ by $S_n$.
To simplify notation, we will also use such permutations $\pi\in S_n$ on vectors $\vv \in \mathbb{R}^n$, where $\pi(\vv)$ will now represent the action of the permutation matrix defined by $\pi$, meaning that the first element of $\pi(\vv)$ is $\vv_{\pi(1)}$ (the $\pi(1)$ entry of $\vv$), the second is $\vv_{\pi(2)}$, and so on. Similarly $\pi^{-1}(\cdot )$ will represent the inverse such function and $\pi_s(\vv)$ will denote the sorted version of a real vector $\vv$ in descending order.


\section{Main Results \label{sec:results}}

We present our main result in the following theorem.
\begin{theorem}\label{thm:PerClassSingleAntenna}
In the $K$-user shared-link broadcast channel with $\Lambda$ shared caches of normalized size $\gamma$, the optimal delivery time within any class/profile $\Lc$ is
\begin{equation}\label{eq:TS_L}
T^*(\Lc)=Conv\bigg(\frac{\sum_{r=1}^{\Lambda-\Lambda\gamma}\mathcal{L}_r{\Lambda-r\choose \Lambda\gamma}}{{\Lambda\choose \Lambda\gamma}}\bigg)
\end{equation}
at points $\gamma\in \{\frac{1}{\Lambda},\frac{2}{\Lambda},\dots,1\}$.
\end{theorem}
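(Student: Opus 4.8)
The plan is to establish both an achievability result and a matching converse, since $T^*(\Lc)$ is defined as a min-max over all schemes and all associations within the class. The expression in \eqref{eq:TS_L} is a convex envelope over the integer points $\Lambda\gamma = t \in \{1,2,\dots,\Lambda\}$, so it suffices to prove that at each such point the delivery time $\frac{\sum_{r=1}^{\Lambda-t}\mathcal{L}_r\binom{\Lambda-r}{t}}{\binom{\Lambda}{t}}$ is both achievable and optimal; the convexity extension to intermediate $\gamma$ then follows by the standard memory-sharing argument for achievability and by a straightforward convexity argument for the converse (any lower bound valid at the corner points extends to their lower convex envelope).

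\textbf{Achievability.} For the upper bound, I would use an MN-type (Maddah-Ali--Niesen) placement applied at the level of the $\Lambda$ caches: split each file into $\binom{\Lambda}{t}$ subfiles indexed by $t$-subsets of $[\Lambda]$, and store subfile $W^n_{\mathcal{T}}$ in cache $\lambda$ iff $\lambda \in \mathcal{T}$. For delivery, since potentially many users share a cache, one processes the users in ``rounds'': in round $j$, one picks (at most) one as-yet-unserved user from each non-empty cache and runs the standard MN delivery over this reduced system of users attached to distinct caches. After the cache with $\mathcal{L}_1$ users is exhausted, fewer caches participate, and the XOR size can be ``shrunk'' accordingly — this is the ``XOR-shrinking-and-removal'' scheme referenced in the abstract. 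Counting the transmissions: the $r$-th most populated cache contributes users for $\mathcal{L}_r$ rounds, and in a round where $r' $ caches are active with distinct users the cost is related to $\binom{\Lambda - r}{t}$ clique-covering terms; summing over $r$ from $1$ to $\Lambda - t$ (beyond which $\binom{\Lambda-r}{t}=0$ and those users are served ``for free'' from their own caches) and normalizing by $\binom{\Lambda}{t}$ yields exactly the claimed rate. The worst case over demands $\dv$ is the all-distinct-requests case, handled as usual.

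\textbf{Converse.} This is the main obstacle and the technical heart of the paper. The approach is an index-coding bound: fix a class $\Lc$ and a worst-case association and demand; under uncoded placement, the cached content induces a partition of each file into subfiles indexed by which caches store them, and the delivery problem becomes an index coding instance whose side-information graph is determined by the placement. One then invokes the standard acyclic-subgraph lower bound for index coding: for any acyclic induced subgraph, the broadcast rate is at least the sum of the message sizes in that subgraph. The skill is in choosing a good family of acyclic orderings — one wants to order users (and their requested subfiles) so that users attached to heavily-populated caches are ``counted'' against subfiles stored in few caches. By averaging the resulting bound over all permutations of the cache indices (and over the demand structure), and using the sorted profile $\Lc$, one should obtain exactly $\frac{\sum_{r=1}^{\Lambda-t}\mathcal{L}_r\binom{\Lambda-r}{t}}{\binom{\Lambda}{t}}$ as a lower bound on $T$ for every valid placement with $\sum_\lambda |\Zc_\lambda$-contribution$| \le \Lambda\gamma$ worth of memory. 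Finally, one optimizes over all uncoded placements (not just the symmetric MN one) — this requires showing the symmetric placement is optimal, typically by a convexity/symmetrization argument over subfile-size allocations — and extends the bound to all $\gamma$ via the convex envelope, matching the achievable scheme and thus proving \eqref{eq:TS_L}.

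The step I expect to be hardest is constructing the right acyclic sets in the index-coding converse and performing the permutation-averaging so that the sorted profile $\Lc$ appears with precisely the coefficients $\binom{\Lambda-r}{t}$: the combinatorial bookkeeping that ties ``the $r$-th most populated cache'' to ``subfiles stored in a $t$-subset avoiding the top $r-1$ caches'' is delicate, and one must simultaneously argue that no uncoded placement other than (a memory-shared version of) the MN placement can do better.
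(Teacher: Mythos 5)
Your overall architecture (MN-type placement over the $\Lambda$ caches with round-based ``XOR-shrinking'' delivery for achievability; an index-coding acyclic-subgraph converse averaged over worst-case demands, followed by an optimization over the subfile-size allocation $\{x_i\}$ and a convex-envelope extension) is the same as the paper's, and your achievability sketch is essentially the paper's scheme. The genuine problem is in the one step you yourself flag as hardest: you propose to obtain the bound ``by averaging the resulting bound over all permutations of the cache indices.'' This is precisely the move the paper goes out of its way to \emph{avoid}. If, for each $(\mathcal{U},\dv)$, you build acyclic subgraphs from all $\Lambda!$ orderings of the caches and average, then a cache of occupancy $\mathcal{L}_\lambda$ lands in position $r$ of the ordering equally often for every $\lambda$, and the resulting coefficient of the level-$i$ mass $x_i$ becomes proportional to $\frac{1}{\Lambda!}\sum_{\sigma}\sum_{r}\mathcal{L}_{\sigma(r)}\binom{\Lambda-r}{i}=\frac{K}{\Lambda}\binom{\Lambda}{i+1}$. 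That is the \emph{uniform-profile} bound $\frac{K(1-\gamma)}{\Lambda\gamma+1}$, which by Corollary~\ref{cor:ressym} is strictly \emph{smaller} than the target $\frac{\sum_{r}\mathcal{L}_r\binom{\Lambda-r}{\Lambda\gamma}}{\binom{\Lambda}{\Lambda\gamma}}$ whenever $\Lc$ is non-uniform (Chebyshev's sum inequality, since both $\mathcal{L}_r$ and $\binom{\Lambda-r}{i}$ are non-increasing in $r$). So the averaged converse cannot meet your achievable rate, and the proof does not close.

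The fix, which is the paper's key deviation from the $\Lambda=K$ argument of Wan--Tuninetti--Piantanida, is to use for every $(\mathcal{U},\dv)$ a \emph{single} permutation $\sigma_s$ --- the one sorting caches by occupancy $|\mathcal{U}_{\sigma_s(1)}|\geq\dots\geq|\mathcal{U}_{\sigma_s(\Lambda)}|$ --- so that the $\mathcal{L}_r$ users of the $r$-th most populated cache are always charged against the $\binom{\Lambda-r}{i}$ subsets avoiding the top $r$ caches (the rearrangement-optimal coupling), and to recover the needed symmetry in the subfile index $\Tau$ not by permuting the acyclic ordering but by averaging only over the demand set $\mathcal{D}_{\Lc}$, i.e., over all assignments of demand blocks of sizes $\mathcal{L}_1,\dots,\mathcal{L}_\Lambda$ to the caches together with all worst-case file choices. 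With that substitution the count $Q_i$ depends only on $i=|\Tau|$, and the rest of your outline (the $\{x_i\}$ optimization via Jensen, which handles ``all uncoded placements, not just the symmetric one,'' and the convex envelope) goes through as you describe.
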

The converse and achievability of \eqref{eq:TS_L} are proved in Section~\ref{sec:converse} and Section~\ref{sec:scheme}, respectively.

\begin{remark}
We note that the converse that supports Theorem~\ref{thm:PerClassSingleAntenna}, encompasses the class of all caching-and-delivery schemes $\chi$ that employ uncoded cache placement under a general sum cache constraint
$\frac{1}{\Lambda}\sum_{\lambda=1}^\Lambda |\mathcal{Z}_\lambda | = M$
which does not \emph{necessarily} impose an individual cache size constraint. The converse also encompasses all scenarios that involve a library of size $\sum_{n\in[N]}|W^{n}| = N$ but where the files may be of different size. In the end, even though the designed optimal scheme will consider an individual cache size $M$ and equal file sizes, the converse guarantees that there cannot exist a scheme (even in settings with uneven cache sizes or uneven file sizes) that exceeds the optimal performance identified here.
\end{remark}

From Theorem~\ref{thm:PerClassSingleAntenna}, we see that in the uniform case\footnote{Here, this uniform case, naturally implies that $\Lambda|K$.} where $\Lc=(\frac{K}{\Lambda},\frac{K}{\Lambda},\dots,\frac{K}{\Lambda})$, the expression in~\eqref{eq:TS_L} reduces to
\[T^*(\Lc)=\frac{K(1-\gamma)}{\Lambda\gamma+1}\]
matching the achievable delay presented in \cite{MND13}, which was recently proved in \cite{WeUlukus17} --- in the context of the multiple file requests problem --- to be optimal under the assumption of uncoded cache placement.

The following corollary relates to this uniform case.
\begin{corollary}\label{cor:ressym}
In the uniform user-to-cache association case where $\Lc=(\frac{K}{\Lambda},\frac{K}{\Lambda},\dots,\frac{K}{\Lambda})$, the aforementioned optimal delay $T^*(\Lc)=\frac{K(1-\gamma)}{\Lambda\gamma+1}$ is smaller than the corresponding delay $T^*(\Lc)$ for any other non-uniform class.
\end{corollary}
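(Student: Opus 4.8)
\emph{Proof plan.} The approach is to show that, at each admissible point $\gamma=t/\Lambda$ with $t\in\{0,1,\dots,\Lambda\}$, the quantity (here $t=\Lambda\gamma$)
\[g_t(\Lc)\defeq\frac{\sum_{r=1}^{\Lambda-t}\mathcal{L}_r\binom{\Lambda-r}{t}}{\binom{\Lambda}{t}}\]
appearing inside $Conv(\cdot)$ in \eqref{eq:TS_L} is a \emph{Schur-convex} function of the profile $\Lc$ over the set $\{\Lc\ge 0:\ \sum_{r=1}^{\Lambda}\mathcal{L}_r=K\}$, so that it is minimized by the uniform profile $\Lc^u=(\tfrac{K}{\Lambda},\dots,\tfrac{K}{\Lambda})$, and then to push this comparison through the lower-convex-envelope operation. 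The first reduction I would make is that $Conv(\cdot)$ is monotone: if $f(t)\le h(t)$ for every $t\in\{0,\dots,\Lambda\}$ then $Conv(f)(x)\le Conv(h)(x)$ for all $x$, since $Conv(f)(x)=\min\{\sum_j\theta_j f(t_j):\theta_j\ge 0,\ \sum_j\theta_j=1,\ \sum_j\theta_j t_j=x\}$ and the feasible set of weights $\{(\theta_j,t_j)\}$ depends only on $x$, not on the function. Hence it suffices to prove $g_t(\Lc^u)\le g_t(\Lc')$ for all $t$ and every non-uniform $\Lc'$ with the same sum $K$, with strict inequality for $t\in\{1,\dots,\Lambda-1\}$.

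For the Schur-convexity step I would extend the sum harmlessly to $r=1,\dots,\Lambda$ (the extra terms vanish since $\binom{\Lambda-r}{t}=0$ for $r>\Lambda-t$), use that $\Lc$ is sorted in descending order so that $\mathcal{L}_r=S_r(\Lc)-S_{r-1}(\Lc)$ with $S_r(\Lc)\defeq\sum_{j=1}^{r}\mathcal{L}_j$ and $S_0\defeq 0$, and apply Abel summation; using the Pascal identity $\binom{\Lambda-r}{t}-\binom{\Lambda-r-1}{t}=\binom{\Lambda-r-1}{t-1}$ this gives
\[g_t(\Lc)=\frac{1}{\binom{\Lambda}{t}}\sum_{r=1}^{\Lambda-1}\binom{\Lambda-r-1}{t-1}\,S_r(\Lc)\quad(t\ge 1),\qquad g_0(\Lc)=K.\]
Each coefficient $\binom{\Lambda-r-1}{t-1}$ is nonnegative, and each $S_r(\cdot)=\max_{|A|=r}\sum_{i\in A}\mathcal{L}_i$ is a maximum of linear maps and symmetric, hence convex and Schur-convex; a nonnegative combination of Schur-convex functions plus a constant is Schur-convex, so $g_t$ is Schur-convex. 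Since $\Lc^u$ is majorized by every nonnegative vector of sum $K$ — concretely, the top-$r$ partial sum of any such vector is at least $r$ times its mean, i.e.\ $S_r(\Lc^u)=rK/\Lambda\le S_r(\Lc')$ for all $r$ — we obtain $g_t(\Lc^u)\le g_t(\Lc')$. Strictness for $t\in\{1,\dots,\Lambda-1\}$ follows because the coefficient of $S_1$ is then $\binom{\Lambda-2}{t-1}>0$, while for a non-uniform $\Lc'$ the largest entry strictly exceeds the mean, $S_1(\Lc^u)=K/\Lambda<\mathcal{L}'_1=S_1(\Lc')$.

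Finally I would transport this to $T^*$. Monotonicity of $Conv(\cdot)$ already gives $T^*(\Lc^u)(\gamma)\le T^*(\Lc')(\gamma)$ on $[0,1]$; for strictness on $(0,1)$, fix $\gamma_0\in(0,1)$ and write $T^*(\Lc')(\gamma_0)=\alpha\,g_{t_1}(\Lc')+(1-\alpha)\,g_{t_2}(\Lc')$ for two adjacent corner points $t_1<t_2$ of the lower convex envelope of the points $(t/\Lambda,g_t(\Lc'))$ with $t_1/\Lambda\le\gamma_0\le t_2/\Lambda$. By convexity $T^*(\Lc^u)(\gamma_0)\le\alpha\,g_{t_1}(\Lc^u)+(1-\alpha)\,g_{t_2}(\Lc^u)$, and if the resulting chain of inequalities were all equalities then, since $g_t(\Lc^u)<g_t(\Lc')$ for $1\le t\le\Lambda-1$, one would need either $\gamma_0$ to equal an interior sample point $t_i/\Lambda$ with $1\le t_i\le\Lambda-1$ (impossible, as then $T^*(\Lc^u)(\gamma_0)\le g_{t_i}(\Lc^u)<g_{t_i}(\Lc')=T^*(\Lc')(\gamma_0)$), or $\{t_1,t_2\}=\{0,\Lambda\}$, which forces $T^*(\Lc')(\gamma_0)=(1-\gamma_0)K$ and contradicts $T^*(\Lc^u)(\gamma_0)=\frac{K(1-\gamma_0)}{\Lambda\gamma_0+1}<(1-\gamma_0)K$ (the degenerate case $\Lambda=1$ having no non-uniform profile). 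I expect the main obstacle to be exactly this last transfer across the $Conv(\cdot)$ operation together with making the Schur-convexity/majorization argument fully rigorous; by contrast, the Abel-summation rewriting and the binomial identities are routine bookkeeping.
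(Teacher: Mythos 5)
Your proof is correct, but it takes a genuinely different (and considerably heavier) route than the paper. The paper's entire argument is one line: since the profile $\Lc$ is sorted in descending order, both $\mathcal{L}_r$ and $\binom{\Lambda-r}{\Lambda\gamma}$ are non-increasing in $r$, so the two sequences are similarly ordered and a Chebyshev-sum-type inequality gives
$\sum_{r=1}^{\Lambda}\mathcal{L}_r\binom{\Lambda-r}{\Lambda\gamma}\geq\frac{1}{\Lambda}\bigl(\sum_{r}\mathcal{L}_r\bigr)\bigl(\sum_{r}\binom{\Lambda-r}{\Lambda\gamma}\bigr)=\frac{K}{\Lambda}\binom{\Lambda}{\Lambda\gamma+1}$,
which is exactly the uniform value; the comparison then passes through $Conv(\cdot)$ by the same pointwise-monotonicity observation you make. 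Your Abel-summation rewriting $g_t(\Lc)=\binom{\Lambda}{t}^{-1}\sum_{r=1}^{\Lambda-1}\binom{\Lambda-r-1}{t-1}S_r(\Lc)$ with nonnegative coefficients on the partial sums $S_r$, combined with $S_r(\Lc')\geq rK/\Lambda$, is a valid alternative and in fact buys more: it establishes Schur-convexity of each $g_t$, hence the full majorization ordering (``more skewed $\Rightarrow$ larger delay'') that the paper only asserts informally after Corollary~\ref{cor:ressym}, and it yields strict inequality for non-uniform profiles, including the careful case analysis needed to push strictness through the lower convex envelope --- none of which the paper's one-line proof addresses. One minor imprecision: for $\gamma_0$ not of the form $t/\Lambda$, $T^*(\Lc^u)(\gamma_0)$ is the piecewise-linear interpolation of the sample points rather than literally $\frac{K(1-\gamma_0)}{\Lambda\gamma_0+1}$, but the bound $T^*(\Lc^u)(\gamma_0)<(1-\gamma_0)K$ that your final contradiction actually uses still holds (for $\Lambda\geq 2$ the adjacent corners cannot be $\{0,\Lambda\}$ since the uniform samples lie on a strictly convex curve), so the argument goes through.
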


\begin{proof}
The proof that the uniform profile induces the smallest delay among all profiles, follows directly from the fact that in~\eqref{eq:TS_L}, both $\mathcal{L}_r$ and ${\Lambda-r\choose \Lambda\gamma}$ are non-increasing with $r$.\end{proof}

In a nutshell, what Theorem~\ref{thm:PerClassSingleAntenna} and Corollary~\ref{cor:ressym} reveal is that profile non-uniformities always bring about increased delays, and the more skewed the profile is, the larger is the delay. This is reflected in Figure~\ref{fig:performance} which shows --- for a setting with $K=30$ users and $\Lambda=6$ caches --- the memory-delay trade-off curves for different user-to-cache association profiles. As expected, Figure~\ref{fig:performance} demonstrates that when all users are connected to the same helper cache, the only gain arising from caching is the well known \textit{local caching gain}. On the other hand, when users are distributed uniformly among the caches (i.e., when $\mathcal{L}_{\lambda}=\frac{K}{\Lambda},\forall\lambda\in[\Lambda]$) the caching gain is maximized and the delay is minimized.

\begin{figure}[t!]
\centering
\includegraphics[width=1\linewidth]{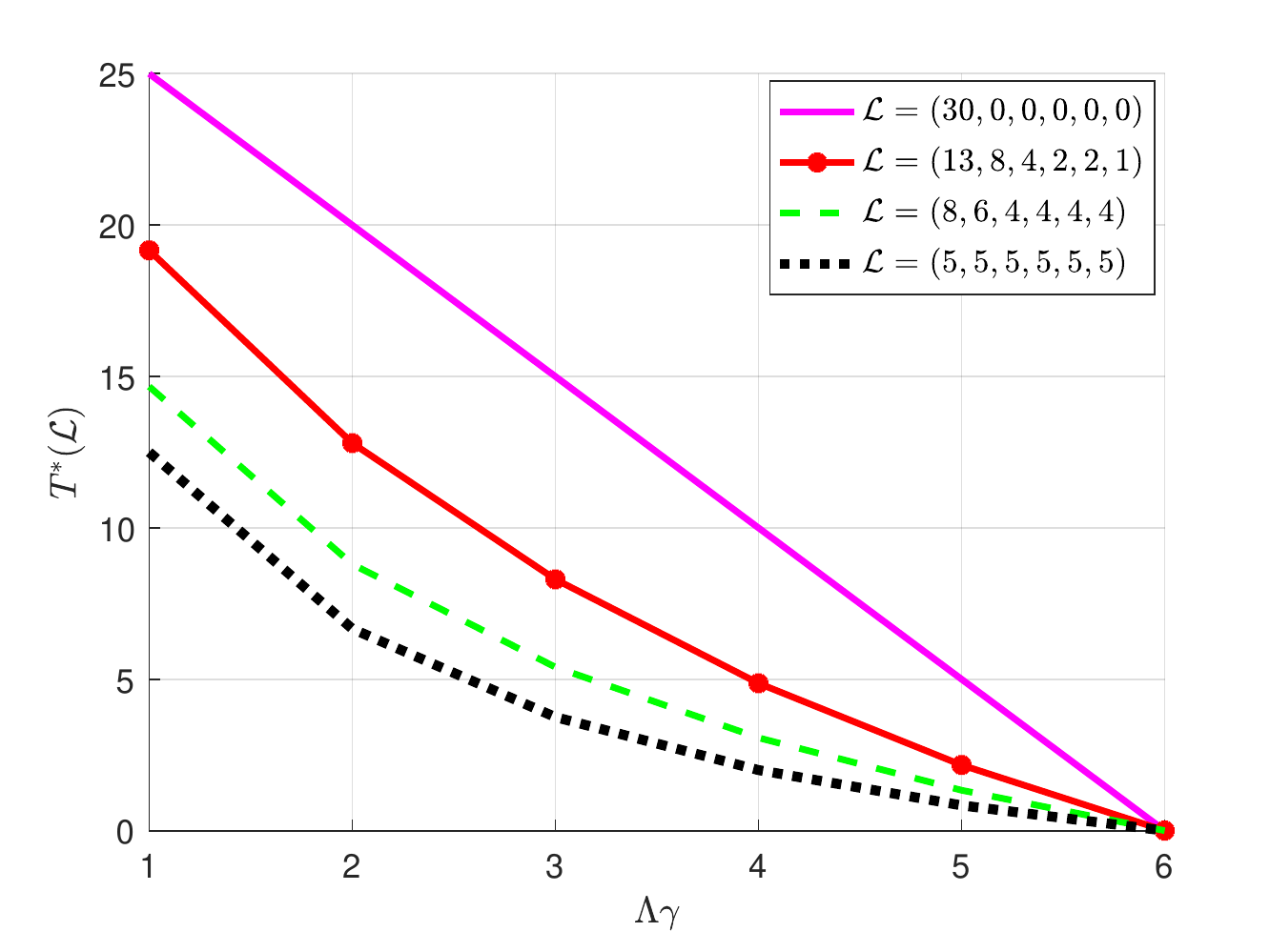}
  \caption{Optimal delay for different user-to-cache association profiles $\Lc$, for $K=30$ users and $\Lambda=6$ caches.}
  \label{fig:performance}
\end{figure}

\begin{remark}[Shared-link coded caching with multiple file requests]\label{rem:multipleFilerequstsResult}
In the error-free shared-link case ($N_0 = 1$), with file-independence and worst-case demand assumptions, the shared-cache problem here is closely related to the coded caching problem with multiple file requests per user, where now $\Lambda$ users with their own cache, request in total $K\geq \Lambda$ files. In particular, changing a bit the format, now each demand vector $\dv = (d_1,d_2,\dots,d_K)$ would represent the vector of the indices of the $K$ requested files, and each user $\lambda = \{1,2,\dots,\Lambda\}$, would request those files from this vector $\dv$, whose indices\footnote{For example, having $\mathcal{U}_2 = \{3,5,7\}$, means that user 2 has requested files $W^{d_{3}},W^{d_{5}},W^{d_{7}}$.} form the set $\mathcal{U}_\lambda \subset [K]$.
At this point, as before, the problem is now defined by the user-to-file association $\mathcal{U} = \{\mathcal{U}_1,\mathcal{U}_2,\dots ,\mathcal{U}_\Lambda\}$ which describes --- given a fixed demand vector $\dv$ --- the files requested by any user. From this point on, the equivalence with the original shared cache problem is complete. As before, each such $\mathcal{U}$ again has a corresponding (sorted) profile $\Lc=(\mathcal{L}_{1},\mathcal{L}_{2},\dots,\mathcal{L}_{\Lambda})$, and belongs to a class $\mathcal{U}_{\Lc}$ with all other associations $\mathcal{U}$ that share the same profile $\Lc$. As we quickly show in the extended version \cite[Appendix Section \RomanNumeralCaps{7}-H]{PEU_arxiv_multi}, our scheme and converse can be adapted to the multiple file request problem, and thus directly from Theorem~\ref{cor:ressym} we conclude that for this multiple file request problem, the optimal delay $T^*(\Lc)\defeq \min_{\chi} \max_{(\mathcal{U},\dv) \in (\mathcal{U}_{\Lc},\{1,\dots,N\}^K)} T(\mathcal{U},\dv,\chi)$ corresponding to any user-to-file association profile $\Lc$, takes the form $T^*(\Lc)= Conv\bigg(\frac{\sum_{r=1}^{\Lambda-\Lambda\gamma}\mathcal{L}_{r}{\Lambda-r\choose \Lambda\gamma}}{{\Lambda\choose \Lambda\gamma}}\bigg)$.  At this point we close the parenthesis regarding multiple file requests, and we refocus exclusively on the problem of shared caches.
\end{remark}


\section{Information Theoretic Converse\label{sec:converse}}
Toward proving Theorem~\ref{thm:PerClassSingleAntenna}, we develop a lower bound on the normalized delivery time in (\ref{eq:T*_def}) for each given user-to-cache association profile $\Lc$. The proof technique is based on the breakthrough in~\cite{WanTP15} which --- for the case of $\Lambda = K$, where each user has their own cache --- employed index coding to bound the performance of coded caching. Part of the challenge here will be to account for having shared caches, and mainly to adapt the index coding approach to reflect non-uniform user-to-cache association classes.

We will begin with lower bounding the normalized delivery time $T(\mathcal{U},\dv,\chi)$, for any user-to-cache association $\mathcal{U}$, demand vector $\dv$ and a generic caching-delivery strategy $\chi$.

\paragraph*{Identifying the distinct problems} 
The caching problem is defined when the user-to-cache association $\mathcal{U}=\{\mathcal{U}_\lambda \}_{\lambda=1}^\Lambda$ and demand vector $\dv$ are revealed.
What we can easily see is that there are many combinations of $\{\mathcal{U}_\lambda \}_{\lambda=1}^\Lambda$ and $\dv$ that yield the same coded caching problem. After all, any permutation of the file indices requested by users assigned to the same cache, will effectively result in the same coded caching problem.
As one can see, every \emph{distinct} coded caching problem is fully defined by $\{\dvlambda\}_{\lambda=1}^\Lambda$, where $\dvlambda$ denotes the vector of file indices requested by the users in $\mathcal{U}_\lambda$, i.e., requested by the $|\mathcal{U}_\lambda|$ users associated to cache $\lambda$. The analysis is facilitated by reordering the demand vector $\boldsymbol{d}$ to take the form $\boldsymbol{d}(\Uc)\dfn (\boldsymbol{d_1}, \cdots, \boldsymbol{d_\Lambda})$.
Based on this, we define the set of worst-case demands associated to a given profile $\Lc$, to be
$$\mathcal{D}_{\Lc} = \{\dv(\mathcal{U}): \dv\in \mathcal{D}_{wc}, \mathcal{U} \in \mathcal{U}_{\Lc}
\}$$ where $\mathcal{D}_{wc}$ is the set of worst-case demand vectors $\dv$ whose entries are different (i.e., where $d_i \neq d_j, ~i,j\in[\Lambda],~i\neq j$). We will convert each such coded caching problem into an index coding problem.

\paragraph*{The corresponding index coding problem} 
To make the transition to the index coding problem, each requested file $W^{\dvlambda(j)}$ is split into $2^\Lambda$ disjoint subfiles $W^{\dvlambda(j)}_\Tau,\Tau\in 2^{[\Lambda]}$ where $\Tau\subset[\Lambda]$ indicates the set of helper nodes in which $W^{\dvlambda(j)}_\Tau$ is cached\footnote{Notice that by considering a subpacketization based on the power set $2^{[\Lambda]}$, and by allowing for any possible size of these subfiles, the generality of the result is preserved. Naturally, this does not impose any sub-packetization related performance issues because this is done only for the purpose of creating a converse.}. Then --- in the context of index coding --- each subfile $W^{\dvlambda(j)}_\Tau$ can be seen as being requested by a different user that has as side information all the content $\Zc_\lambda$ of the same helper node $\lambda$. Naturally, no subfile of the form $W^{\dvlambda(j)}_\Tau, \; \forall~\Tau \ni\lambda$ is requested, because helper node $\lambda$ already has it. Therefore the corresponding index coding problem is defined by $K2^{\Lambda-1}$ requested subfiles, and it is fully represented by the side-information graph $\mathcal{G}=(\mathcal{V}_{\mathcal{G}},\mathcal{E}_{\mathcal{G}})$, where $\mathcal{V}_{\mathcal{G}}$ is the set of vertices (each vertex/node representing a different subfile $W^{\dvlambda(j)}_\Tau, \Tau\not\ni\lambda$) and $\mathcal{E}_{\mathcal{G}}$ is the set of direct edges of the graph. Following standard practice in index coding, a directed edge from node $W^{\dvlambda(j)}_\Tau$ to $W^{\boldsymbol{d_{\lambda'}}(j')}_{\Tau'}$ exists if and only if $\lambda'\in\Tau$.  For any given $\mathcal{U}$, $\boldsymbol{d}$ (and of course, for any scheme $\chi$) the total delay $T$ required for this index coding problem, is the completion time for the corresponding coded caching problem.

\paragraph*{Lower bounding $T(\Uc,\dv,\chi)$}
We are interested in lower bounding $T(\Uc,\dv,\chi)$ which represents the total delay required to serve the users for the index coding problem corresponding to the side-information graph $\mathcal{G}_{\Uc,\dv}$ defined by $\Uc,\dv,\chi$ or equivalently by $\boldsymbol{d}(\Uc),\chi$.

In the next lemma, we remind the reader --- in the context of our setting --- the useful index-coding converse from~\cite{ArbaWang13}.

\begin{lemma}(Cut-set-type converse \cite{ArbaWang13})\label{cor_dof}
For a given $\Uc,\dv,\chi$, in the corresponding side information graph $\mathcal{G}_{\Uc,\dv}=(\mathcal{V}_{\mathcal{G}},\mathcal{E}_{\mathcal{G}})$ of the shared-link broadcast channel with $\mathcal{V}_{\mathcal{G}}$ vertices/nodes and $\mathcal{E}_{\mathcal{G}}$ edges, the following inequality holds
\begin{equation}\label{eq:indexbound}T\geq \sum_{\smallV \in \mathcal{V_{J}}}|\smallV|
\end{equation}
for every acyclic induced subgraph $\mathcal{J}$ of $\mathcal{G}_{\Uc,\dv}$, where $\mathcal{V}_{\mathcal{J}}$ denotes the set of nodes of the subgraph $\mathcal{J}$, and where $|\smallV|$ is the size of the message/subfile/node $\smallV$.
\end{lemma}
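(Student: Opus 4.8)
The plan is to establish \eqref{eq:indexbound} by the standard acyclic-subgraph converse for index coding, instantiated on the side-information graph $\mathcal{G}_{\Uc,\dv}$ constructed above. First I would record the information-theoretic ingredients the construction supplies: under uncoded placement each cache content $\Zc_\lambda$ is a deterministic function of the library (it is literally a sub-collection of the subfiles $W^n_{\Tau}$); the broadcast signal $X$ that $\chi$ transmits during the delivery phase is a deterministic function of the entire library and has length at most $T$ file-units, so $H(X)\le T$; the $K2^{\Lambda-1}$ nodes of $\mathcal{G}_{\Uc,\dv}$ are mutually independent messages, the one at node $\smallV=W^{\dvlambda(j)}_{\Tau}$ having entropy $|\smallV|$; and the receiver attached to $\smallV$ recovers $W^{\dvlambda(j)}_{\Tau}$ from $(X,\Zc_\lambda)$.

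Next I would use acyclicity of $\mathcal{J}$ to fix a decoding order. Since the induced subgraph on $\mathcal{V}_{\mathcal{J}}$ has no directed cycle, a topological sort orders its vertices $\smallV_1,\dots,\smallV_m$ so that every edge of $\mathcal{J}$ runs from a smaller to a larger index. Because a directed edge $\smallV_i\to\smallV_j$ exists exactly when the subfile requested at $\smallV_i$ is already stored in the cache feeding $\smallV_j$, this ordering forces the side information of $\smallV_\ell$ to contain, among the messages $W_{\smallV_1},\dots,W_{\smallV_m}$, only those with index strictly less than $\ell$; every other subfile present in that side information is not a vertex of $\mathcal{J}$.

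The heart of the argument is then a single entropy chain. Writing $\mathcal{W}_{\overline{\mathcal J}}$ for the collection of all subfiles of all library files that are not vertices of $\mathcal{J}$, we have $T\ge H(X)\ge H(X\mid \mathcal{W}_{\overline{\mathcal J}})=I\!\left(X;W_{\smallV_1},\dots,W_{\smallV_m}\mid \mathcal{W}_{\overline{\mathcal J}}\right)$, the last equality holding because $X$ is a function of the whole library. Expanding this mutual information by the chain rule along $\smallV_1,\dots,\smallV_m$, at the $\ell$-th term the conditioning consists of $\mathcal{W}_{\overline{\mathcal J}}$ together with $W_{\smallV_1},\dots,W_{\smallV_{\ell-1}}$, which by the previous paragraph already contains the full side information of receiver $\smallV_\ell$; hence that receiver decodes $W_{\smallV_\ell}$ from $X$ and the conditioning, the conditional entropy given $X$ vanishes, and the term reduces to $H(W_{\smallV_\ell}\mid \mathcal{W}_{\overline{\mathcal J}},W_{\smallV_1},\dots,W_{\smallV_{\ell-1}})=H(W_{\smallV_\ell})=|\smallV_\ell|$ by independence of the messages. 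Summing over $\ell$ yields $T\ge\sum_{\ell=1}^m|\smallV_\ell|=\sum_{\smallV\in\mathcal{V}_{\mathcal J}}|\smallV|$.

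The step I expect to require the most care is the bookkeeping that guarantees each receiver's side information already lies inside the conditioning set at the exact moment its message is peeled off in the chain rule; this is precisely what the topological ordering of the acyclic subgraph delivers, and it is the reason an arbitrary induced subgraph would not suffice. The only other routine subtlety is accommodating a vanishing probability of decoding error, which replaces the vanishing conditional-entropy terms by $o(1)$ quantities via Fano's inequality and leaves the stated bound intact in the limit.
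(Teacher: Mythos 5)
Your proof is correct: it is the standard maximum-acyclic-induced-subgraph argument (topological ordering of the acyclic subgraph followed by a chain-rule expansion of $I(X;W_{\smallV_1},\dots,W_{\smallV_m}\mid \mathcal{W}_{\overline{\mathcal J}})$ with all non-subgraph subfiles placed in the conditioning), which is precisely the argument behind the result of \cite{ArbaWang13} that the paper states and cites without reproducing a proof. Your edge-direction bookkeeping matches the paper's convention (an edge $A\to B$ exists iff $A$'s subfile sits in the cache of $B$'s receiver), so the topological sort correctly guarantees that each receiver's in-subgraph side information precedes it in the peeling order, and the Fano caveat you note covers the non-zero-error case.
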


\paragraph*{Creating large acyclic subgraphs}
Lemma~\ref{cor_dof} suggests the need to create (preferably large) acyclic subgraphs of $\mathcal{G}_{\mathcal{U},\dv}$. The following lemma describes how to properly choose a set of nodes to form a large acyclic subgraph.

\begin{lemma}\label{lem:cons_acyclic}
An acyclic subgraph $\mathcal{J}$ of $\mathcal{G}_{\Uc,\dv}$ corresponding to the index coding problem defined by $\Uc,\dv,\chi$ for any $\Uc$ with profile $\Lc$, is designed here to consist of all subfiles $W^{\boldsymbol{d_{\sigma_{s}(\lambda)}}(j)}_{\Tau_{\lambda}},~\forall j\in [\mathcal{L}_{\lambda}],~\forall \lambda\in [\Lambda]$ for all $\Tau_{\lambda}\subseteq [\Lambda]\setminus \{\sigma_s(1),\dots,\sigma_s(\lambda)\}$ where $\sigma_s\in S_{\Lambda}$ is the permutation such that $|\mathcal{U}_{\sigma_s(1)}|\geq |\mathcal{U}_{\sigma_s(2)}|\geq\dots\geq |\mathcal{U}_{\sigma_s(\Lambda)}|$.
\end{lemma}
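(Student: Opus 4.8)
The plan is to show that the proposed vertex set induces an acyclic subgraph by exhibiting a total ordering of its nodes such that every directed edge of $\mathcal{G}_{\Uc,\dv}$ between two nodes in the set points ``backward'' in the ordering; equivalently, I will partition the chosen nodes into ordered layers indexed by $\lambda = 1, \dots, \Lambda$ and show that all edges go from a later layer to an earlier layer (or stay within a layer in a way that is itself acyclic). Recall the edge rule: a directed edge from $W^{\dvlambda(j)}_\Tau$ to $W^{\boldsymbol{d_{\lambda'}}(j')}_{\Tau'}$ exists iff $\lambda' \in \Tau$. Without loss of generality, relabel the caches by $\sigma_s$ so that $\mathcal{L}_1 \geq \mathcal{L}_2 \geq \cdots \geq \mathcal{L}_\Lambda$; then layer $\lambda$ consists of all subfiles $W^{\boldsymbol{d_\lambda}(j)}_{\Tau_\lambda}$ with $j \in [\mathcal{L}_\lambda]$ and $\Tau_\lambda \subseteq [\Lambda] \setminus \{1, \dots, \lambda\}$, i.e. the cache-subset label avoids all indices $\leq \lambda$.

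The key step is the following observation. Suppose there is an edge from a node in layer $\lambda$, say $W^{\boldsymbol{d_\lambda}(j)}_{\Tau_\lambda}$, to a node in layer $\lambda'$, say $W^{\boldsymbol{d_{\lambda'}}(j')}_{\Tau_{\lambda'}}$. By the edge rule this forces $\lambda' \in \Tau_\lambda$. But membership in layer $\lambda$ requires $\Tau_\lambda \subseteq [\Lambda] \setminus \{1,\dots,\lambda\}$, so $\lambda' \in \Tau_\lambda$ implies $\lambda' > \lambda$. Hence every edge among the chosen nodes strictly increases the layer index, so there can be no directed cycle that leaves a layer. It remains to handle edges \emph{within} a single layer $\lambda$: such an edge from $W^{\boldsymbol{d_\lambda}(j)}_{\Tau_\lambda}$ to $W^{\boldsymbol{d_\lambda}(j')}_{\Tau_\lambda'}$ would require $\lambda \in \Tau_\lambda$, which again contradicts $\Tau_\lambda \subseteq [\Lambda]\setminus\{1,\dots,\lambda\}$; therefore there are no intra-layer edges at all, and the induced subgraph is a DAG. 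Finally, I should double-check that the chosen set is a legitimate set of vertices of $\mathcal{G}_{\Uc,\dv}$: each node $W^{\boldsymbol{d_\lambda}(j)}_{\Tau_\lambda}$ with $\Tau_\lambda \subseteq [\Lambda]\setminus\{1,\dots,\lambda\}$ in particular has $\lambda \notin \Tau_\lambda$, so it is indeed a requested subfile (not already cached at node $\lambda$), consistent with $\mathcal{V}_{\mathcal{G}}$ containing only subfiles $W^{\dvlambda(j)}_\Tau$ with $\Tau \not\ni \lambda$.

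I expect the argument itself to be short; the only real subtlety — and where care is needed — is bookkeeping with the permutation $\sigma_s$. The statement is phrased after relabeling via $\sigma_s$, so strictly one should either carry $\sigma_s$ through the whole edge-rule computation (writing $\sigma_s(\lambda') \in \Tau_\lambda$ and $\Tau_\lambda \subseteq [\Lambda] \setminus \{\sigma_s(1),\dots,\sigma_s(\lambda)\}$, then invoking injectivity of $\sigma_s$ to conclude $\lambda' > \lambda$), or remark once that relabeling caches is a harmless isomorphism of the side-information graph and then work in the sorted labeling. Either way the conclusion is the same. The genuinely important point for the sequel is not just acyclicity but the \emph{size} of this subgraph — it contains, for each $\lambda$, exactly $\mathcal{L}_\lambda$ users each contributing $2^{\Lambda-\lambda}$ subfiles — since that count is what will be fed into the cut-set bound of Lemma~\ref{cor_dof} to produce the summation $\sum_{r} \mathcal{L}_r \binom{\Lambda - r}{\Lambda\gamma}$ appearing in Theorem~\ref{thm:PerClassSingleAntenna}; but that accounting belongs to the step after this lemma.
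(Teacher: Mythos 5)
Your proof is correct and follows essentially the same route as the paper's (which, as noted in the text, adapts \cite[Lemma 1]{WanTP15}): order the chosen nodes by the layer index $\lambda$ induced by $\sigma_s$ and observe that the constraint $\Tau_{\lambda}\subseteq [\Lambda]\setminus \{\sigma_s(1),\dots,\sigma_s(\lambda)\}$ forces every edge to strictly increase the layer index, ruling out both intra-layer edges and directed cycles. Your side remarks --- that each node is a legitimate vertex since $\sigma_s(\lambda)\notin\Tau_\lambda$, and that the node count per layer is $\mathcal{L}_\lambda\, 2^{\Lambda-\lambda}$ --- are also consistent with how the lemma is used in \eqref{eq:TLB}.
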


The reader is referred to the extended version \cite[Section V]{PEU_arxiv_multi} for the proof of Lemma \ref{lem:cons_acyclic} which is an adaptation of \cite[Lemma 1]{WanTP15} to our setting.

\begin{remark}
The choice of the permutation $\sigma_s$ is critical for the development of a tight converse. Any other choice $\sigma\in S_\Lambda$ may result --- in some crucial cases --- in an acyclic subgraph with a smaller number of nodes and therefore a looser bound. This approach here deviates from the original approach in \cite[Lemma 1]{WanTP15}, which instead considered --- for each $\dv,\chi$, for the uniform user-to-cache association case of $K = \Lambda$ --- the set of \emph{all} possible permutations, that jointly resulted in a certain symmetry that is crucial to that proof. Here in our case, such symmetry would not serve the same purpose as it would dilute the non-uniformity in $\Lc$ that we are trying to capture. Our choice of a single carefully chosen permutation, allows for a bound which --- as it turns out --- is tight even in non-uniform cases.
\end{remark}
Having chosen an acyclic subgraph according to Lemma~\ref{lem:cons_acyclic}, we return to Lemma~\ref{cor_dof} and form the following lower bound by adding the sizes of all subfiles associated to the chosen acyclic graph as follows
\begin{equation}
T(\Uc,\boldsymbol{d},\chi)\geq T^{LB}(\Uc,\boldsymbol{d},\chi)
\end{equation} where
\begin{align}
&T^{LB}(\Uc,\boldsymbol{d},\chi) \defeq  \bigg( \sum_{j=1}^{\mathcal{L}_{1}}\sum_{\Tau_{1}\subseteq [\Lambda]\setminus \{\sigma_s(1)\}}|W^{\boldsymbol{d}_{\sigma_s(1)}(j)}_{\Tau_{1}}|\nonumber \\
&+ \sum_{j=1}^{\mathcal{L}_{2}}\sum_{\Tau_{2}\subseteq [\Lambda]\setminus \{\sigma_s(1),\sigma_s(2)\}}|W^{\boldsymbol{d}_{\sigma_s(2)}(j)}_{\Tau_{2}}|+\cdots \nonumber \\
&+ \sum_{j=1}^{\mathcal{L}_{\Lambda}}\sum_{\Tau_{\Lambda}\subseteq [\Lambda]\setminus \{\sigma_s(1),\dots,\sigma_s(\Lambda)\}}|W^{\boldsymbol{d}_{\sigma_s(\Lambda)}(j)}_{\Tau_{\Lambda}}|
\bigg). \label{eq:TLB}
\end{align}

Our interest lies in a lower bound for the worst-case delivery time/delay associated to profile $\Lc$. Such a worst-case naturally corresponds to the scenario when all users request different files, i.e., where all the entries of the demand vector $\dv(\mathcal{U})$ are different. The corresponding lower bound can be developed by averaging over worst-case demands. Recalling our set $\mathcal{D}_{\Lc}$, the worst-case delivery time can thus be written as
\begin{align}
T^*(\Lc)&\defeq \min_{\chi} \max_{(\mathcal{U},\dv) \in (\mathcal{U}_{\Lc},[N]^K)} T(\mathcal{U},\dv,\chi)\\
&\overset{(a)}{\geq} \min_{\chi} \frac{1}{|\mathcal{D}_{\Lc}|} \sum_{\dv(\mathcal{U}) \in \mathcal{D}_{\Lc}} T(\dv(\mathcal{U}),\chi)\label{eq:alternativedefinitionofT}
\end{align} where in step (a), we used the following change of notation $T(\dv(\mathcal{U}),\chi)\dfn T(\mathcal{U},\dv,\chi)$ and averaged over worst-case demands.

With a given class/profile $\Lc$ in mind, in order to construct $\mathcal{D}_{\Lc}$ (so that we can then average over it), we consider a demand vector $\dv\in \mathcal{D}_{wc}$ and a permutation $\pi\in S_{\Lambda}$. 
Then we create the following set of $\Lambda$ vectors
\begin{align*}
&\boldsymbol{d^{'}_1}= (d_1 : d_{\mathcal{L}_1}),\\
&\boldsymbol{d^{'}_2}= (d_{\mathcal{L}_1+1} : d_{\mathcal{L}_1+\mathcal{L}_2}),\\
& \vdots \\
&\boldsymbol{d^{'}_{\Lambda}}= (d_{\sum_{i=1}^{\Lambda-1}\mathcal{L}_{i}~+1} : d_{K}).
\end{align*}
For each permutation $\pi\in S_{\Lambda}$ applied to the set $\{1,2,\dots,\Lambda\}$, a demand vector $\dv(\mathcal{U})$ is constructed as follows
\begin{align}
\dv(\mathcal{U})&\dfn(\boldsymbol{d_1},\boldsymbol{d_2},\dots,\boldsymbol{d_\Lambda})\\
&=(\boldsymbol{d^{'}_{\pi^{-1}(1)}},\boldsymbol{d^{'}_{\pi^{-1}(2)}},\dots,\boldsymbol{d^{'}_{\pi^{-1}(\Lambda)}}).
\end{align}
This procedure is repeated for all $\Lambda!$ permutations ${\pi\in S_{\Lambda}}$ and all $P(N,K)$ worst-case demands $\dv\in \mathcal{D}_{wc}$. This implies that the cardinality of $\mathcal{D}_{\Lc}$ is ${|\mathcal{D}_{\Lc}|=P(N,K)\cdot \Lambda!}$.

Now the optimal worst-case delivery time in (\ref{eq:alternativedefinitionofT}) is bounded as
\begin{align}
T^{*}(\Lc) 
&= \min_{\chi}T(\Lc,\chi)\\
& \geq  \min_{\chi} \frac{1}{P(N,K)\Lambda!} \sum_{\dv(\mathcal{U}) \in \mathcal{D}_{\Lc}} T^{LB}(\dv(\mathcal{U}),\chi) \label{eq:lowerboundcompact}
\end{align}
where $T^{LB}(\dv(\mathcal{U}),\chi)$ is given by (\ref{eq:TLB}) for each reordered demand vector $\dv(\mathcal{U})\in \mathcal{D}_{\Lc}$. Rewriting the summation in (\ref{eq:lowerboundcompact}), we get
\begin{align}\label{eq:longinequality}
&\sum_{\dv(\mathcal{U})\in \mathcal{D}_{\Lc}}  T^{LB}(\dv(\mathcal{U}),\chi)= \nonumber \\
&\sum_{i=0}^{\Lambda}\sum_{n\in[N]}\sum_{\Tau\subseteq[\Lambda]:|\Tau|=i} |W^n_{\Tau}| \cdot \underbrace{\sum_{\dv(\mathcal{U})\in \mathcal{D}_{\Lc}} \mathds{1}_{\mathcal{V}_{\mathcal{J}_s^{\dv(\mathcal{U})}}}(W^n_{\Tau})}_{Q_{i}(W^n_\Tau)}
\end{align} where $\mathcal{V}_{\mathcal{J}_s^{\dv(\mathcal{U})}}$ is the set of vertices in the acyclic subgraph chosen according to Lemma \ref{lem:cons_acyclic} for a given $\boldsymbol{d}(\mathcal{U})$. In the above, $\mathds{1}_{\mathcal{V}_{\mathcal{J}_s^{\dv(\mathcal{U})}}}(W^n_{\Tau})$ denotes the indicator function which takes the value of 1 only if $W^n_{\Tau} \subset \mathcal{V}_{\mathcal{J}_s^{\dv(\mathcal{U})}}$, else it is set to zero.

A crucial step toward removing the dependence on $\Tau$, comes from the fact that
\begin{align}\label{eq:Qi}
Q_{i} &= Q_{i}(W^n_\Tau)\dfn \sum_{\dv(\mathcal{U})\in \mathcal{D}_{\Lc}} \mathds{1}_{\mathcal{V}_{\mathcal{J}_s^{\dv(\Uc)}}}(W^n_{\Tau}) \nonumber\\
=&{N-1 \choose K-1}\sum_{r=1}^{\Lambda}P(\Lambda-i-1,r-1)(\Lambda-r)!\mathcal{L}_{r} \nonumber\\
&\times P(K-1,\mathcal{L}_{r}-1) (K-\mathcal{L}_{r})! (\Lambda-i)
\end{align}
where we can see that the total number of times a specific subfile appears --- in the summation in \eqref{eq:longinequality}, over the set of all possible $\dv(\mathcal{U})  \in \mathcal{D}_{\Lc}$, and given our chosen permutation $\sigma_s$ 
--- is not dependent on the subfile itself but is dependent only on the number of caches $i=|\Tau|$ storing that subfile. The proof of \eqref{eq:Qi} can be found in the extended version \cite[Section V]{PEU_arxiv_multi} of our work.

In the spirit of~\cite{WanTP15}, defining
\begin{equation}
x_i\dfn\sum_{n\in[N]}\sum_{\Tau\subseteq[\Lambda]:|\Tau|=i}|W^n_{\Tau}|
\end{equation}
to be the total amount of data stored in exactly $i$ helper nodes, we see that
\begin{equation}\label{eq:sumfiles}
N=\sum_{i=0}^{\Lambda}x_i=\sum_{i=0}^{\Lambda}\sum_{n\in[N]}\sum_{\Tau\subseteq[\Lambda]:|\Tau|=i}|W^n_{\Tau}|
\end{equation}
and we see that combining \eqref{eq:lowerboundcompact}, \eqref{eq:longinequality} and \eqref{eq:Qi}, gives
\begin{equation}\label{eq:compacteq}
T(\Lc,\chi)\geq \sum_{i=0}^{\Lambda}\frac{Q_{i}}{P(N,K)\Lambda!}x_{i}.
\end{equation}
Now substituting \eqref{eq:Qi} into \eqref{eq:compacteq}, after some algebraic manipulations, we get that
\begin{align}
T(\Lc,\chi)&\geq \sum_{i=0}^{\Lambda}\frac{\sum_{r=1}^{\Lambda-i}\mathcal{L}_{r} {\Lambda-r\choose i}}{N{\Lambda\choose i}}x_{i} \label{eq:LBwithxi}\\
&=\sum_{i=0}^{\Lambda}\frac{x_{i}}{N}c_{i} \label{eq:LBwithxi_2}
\end{align}
where $c_{i}\triangleq \frac{\sum_{r=1}^{\Lambda-i}\mathcal{L}_r{\Lambda-r\choose i}}{{\Lambda\choose i}}$ decreases with $i\in \{0,1,\dots,\Lambda\}$.
The proof of the transition from \eqref{eq:compacteq} to \eqref{eq:LBwithxi}, as well as the monotonicity proof for the sequence $\{c_i\}_{i\in [\Lambda]\cup \{0\}}$, are given in the extended version of this work in \cite[Section V]{PEU_arxiv_multi}.

Under the file-size constraint given in \eqref{eq:sumfiles}, and given the following cache-size constraint
\begin{equation}
\sum_{i=0}^{\Lambda}i \cdot x_{i}\leq  \Lambda M \label{eq:constr2}
\end{equation}
the expression in~\eqref{eq:LBwithxi} serves as a lower bound on the delay of any caching-and-delivery scheme $\chi$ whose caching policy implies a set of $\{x_i\}$.

We then employ the Jensen's-inequality based technique of \cite[Proof of Lemma 2]{YuMA16} to minimize the expression in \eqref{eq:LBwithxi}, over all admissible $\{x_i\}$. Hence for any integer $\Lambda\gamma$, we have
\begin{equation}\label{eq:optimization1}
T(\Lc,\chi)\geq \frac{\sum_{r=1}^{\Lambda-\Lambda\gamma}\mathcal{L}_r{\Lambda-r\choose \Lambda\gamma}}{{\Lambda\choose \Lambda\gamma}}
\end{equation}
whereas for all other values of $\Lambda\gamma$, this is extended to its convex lower envelop.
The detailed derivation of~\eqref{eq:optimization1} can again be found in \cite[Section V]{PEU_arxiv_multi}.
This concludes lower bounding  $\max_{(\mathcal{U},\dv) \in (\mathcal{U}_{\Lc},[N]^K)} T(\mathcal{U},\dv,\chi)$, and thus --- given that the right hand side of \eqref{eq:optimization1} is independent of $\chi$ --- lower bounds the performance for any scheme $\chi$, which hence concludes the proof of the converse for Theorem~\ref{thm:PerClassSingleAntenna}.
\qed

\subsubsection{Proof of the converse for Corollary~\ref{cor:ressym} \label{sec:ConverseUniform}}
For the uniform case of $\Lc = [\frac{K}{\Lambda},\frac{K}{\Lambda},\dots,\frac{K}{\Lambda}]$, the lower bound in \eqref{eq:optimization1} becomes
\begin{align}  \frac{\sum_{r=1}^{\Lambda-\Lambda\gamma}\mathcal{L}_r{\Lambda-r\choose \Lambda\gamma}}{{\Lambda\choose \Lambda\gamma}}
& =\frac{K}{\Lambda}\frac{\sum_{r=1}^{\Lambda-\Lambda\gamma}{\Lambda-r\choose \Lambda\gamma}}{{\Lambda\choose \Lambda\gamma}} \\
  &\overset{(a)}{=} \frac{K}{\Lambda}\frac{{\Lambda \choose \Lambda\gamma+1}}{{\Lambda\choose \Lambda\gamma}}  \\
	& = \frac{K(1-\gamma)}{\Lambda\gamma+1}
  \end{align}
where the equality in step (a) is due to Pascal's triangle.  \qed


\section{Coded Caching Scheme \label{sec:scheme}}
This section is dedicated to the description of the placement-and-delivery scheme achieving the performance presented in Theorem~\ref{thm:PerClassSingleAntenna} and Corollary \ref{cor:ressym}. The formal description of the optimal scheme in the upcoming subsection will be followed by a clarifying example that demonstrates the main idea behind the design in Section \ref{subsec:example_scheme}.


\subsection{Description of the General Scheme}

\subsubsection{Cache Placement Phase}
The placement phase employs the original cache-placement algorithm of~\cite{MN14} corresponding to the scenario of having only $\Lambda$ users, each with their own cache. Hence --- recalling from~\cite{MN14} --- first each file $W^n$ is split into $\Lambda \choose \Lambda\gamma$ disjoint subfiles $W^n_\Tau$, for each $\Tau \subset [\Lambda]$, $|\Tau|=\Lambda\gamma$, and then each cache $\Zc_\lambda$ stores a fraction $\gamma$ of each file, as follows
\begin{equation}
\Zc_\lambda=\{W^n_\Tau :\Tau\ni\lambda,~ \forall n\in[N]\}.
\end{equation}

\subsubsection{Delivery Phase}
For the purpose of the scheme description only, we will assume without loss of generality that $|\mathcal{U}_1| \geq |\mathcal{U}_2| \geq \dots \geq |\mathcal{U}_{\Lambda}|$ (any other case can be handled by simple relabeling of the caches), and we will use the notation $\mathcal{L}_\lambda \defeq |\Uc_\lambda|$. Furthermore, in a slight abuse of notation, we will consider here each $\mathcal{U}_\lambda$ to be an \emph{ordered vector} describing the users associated to cache $\lambda$.

The delivery phase commences with the demand vector $\dv$ being revealed to the server.
Delivery will consist of $\mathcal{L}_1$ rounds, where each round $j\in[\mathcal{L}_1]$ serves users\footnote{A similar transmission method can be found also in the work of \cite{JinCaireGlobecom16} for the setting of decentralized coded caching with reduced subpacketization. } 
\begin{equation*}
\mathcal{R}_j=\bigcup_{\lambda\in[\Lambda]} \big( \mathcal{U}_{\lambda}(j):\mathcal{L}_\lambda \geq j \big)
\end{equation*}
and $\mathcal{U}_{\lambda}(j)$ is the $j$-th user in the set $\mathcal{U}_{\lambda}$.

\paragraph*{Transmission scheme}
For each round $j$, we create $\Lambda \choose \Lambda\gamma+1$ sets $\mathcal{Q}\subseteq [\Lambda]$ of size $|\mathcal{Q}|=\Lambda\gamma+1$, and for each set $\mathcal{Q}$, we pick the set of receiving users as follows
$$\chi_\mathcal{Q}=\bigcup_{\lambda\in \mathcal{Q}}\big( \mathcal{U}_{\lambda}(j):\mathcal{L}_\lambda \geq j \big).$$ Then if $\chi_\mathcal{Q}\neq \emptyset$, the server transmits the following message
\begin{equation} \label{eq:TransmissionSingleAntenna}
x_{\chi_{\mathcal{Q}}}=\!\!\!\!\bigoplus_{\lambda\in \mathcal{Q}:\mathcal{L}_\lambda \geq j} W^{d_{\mathcal{U}_{\lambda}(j)}}_{\mathcal{Q}\backslash{\{\lambda\}}}.
 \end{equation}
On the other hand, if $\chi_\mathcal{Q} = \emptyset$, there is no transmission.

\paragraph*{Decoding}
Directly from~\eqref{eq:TransmissionSingleAntenna}, we see that each receiver $\mathcal{U}_{\lambda}(j)$ obtains a received signal which takes the form
\begin{equation}
y_{\mathcal{U}_{\lambda}(j)}=W^{d_{\mathcal{U}_{\lambda}(j)}}_{\mathcal{Q}\backslash{\{\lambda\}}} + \underbrace{\bigoplus_{\lambda '\in \mathcal{Q}\backslash{\{\lambda\}}:\mathcal{L}_{\lambda'} \geq j} W^{d_{\mathcal{U}_{\lambda'}(j)}}_{\mathcal{Q}\backslash{\{\lambda'\}}}}_{interference} \label{eq:received_signal}
\end{equation}
which shows that the entire interference term experienced by receiver $\mathcal{U}_{\lambda}(j)$ can be `cached-out' because all the files $W^{d_{\mathcal{U}_{\lambda '}(j)}}_{\mathcal{Q}\backslash{\{\lambda '\}}}$ for all $\lambda^{'}\in \mathcal{Q}\setminus{\{\lambda\}}, \mathcal{L}_{\lambda'}\geq j$ that appear in this term, can be found --- since $\lambda\in \mathcal{Q}\backslash\{\lambda^{'}\}$ --- in cache $\lambda$ associated to this user.

This completes the description of the scheme.

\subsection{Calculation of the Delay}
To first calculate the delay needed to serve the users in $\mathcal{R}_j$ during round $j$, we recall that there are $\Lambda \choose \Lambda\gamma+1$ sets of users defined as $$\chi_\mathcal{Q}=\bigcup_{\lambda\in \mathcal{Q}}\big( \mathcal{U}_{\lambda}(j):\mathcal{L}_\lambda \geq j \big), \mathcal{Q}\subseteq [\Lambda]$$ and we recall that $|\mathcal{U}_1| \geq |\mathcal{U}_2| \geq \cdots \geq |\mathcal{U}_{\Lambda}|$. Furthermore we see that there are ${\Lambda-|\mathcal{R}_j| \choose \Lambda\gamma+1}$ such sets $\chi_\mathcal{Q}$ which are empty, which means that round $j$ consists of
\begin{equation}\label{eq:totsubround}
{\Lambda \choose \Lambda\gamma+1}-{\Lambda-|\mathcal{R}_j| \choose \Lambda\gamma+1}
\end{equation}
transmissions. Since each file is split into ${\Lambda\choose \Lambda\gamma}$ subfiles, the duration of each such transmission is
$\frac{1}{{\Lambda\choose \Lambda\gamma}}$.
Thus summing over all $\mathcal{L}_1$ rounds, the total delay takes the form
\begin{equation}\label{eq:totdelay1}
T=\sum_{j=1}^{\mathcal{L}_1}\frac{{{\Lambda \choose \Lambda\gamma+1}-{\Lambda-|\mathcal{R}_j| \choose \Lambda\gamma+1}}}{{\Lambda\choose \Lambda\gamma}}
\end{equation}
and after some basic algebraic manipulation (see~\cite[Section V]{PEU_arxiv_multi} for details), the delay takes the final form
\begin{equation}\label{eq:totdelay2}
T=\frac{\sum_{r=1}^{\Lambda-\Lambda\gamma}\mathcal{L}_r{\Lambda-r\choose \Lambda\gamma}}{{\Lambda\choose \Lambda\gamma}}
\end{equation} which concludes the achievability part of the proof. \qed

\subsection{Example for $K=N=8$, $\Lambda=4$ and $\Lc=(3,2,2,1)$\label{subsec:example_scheme}}
Consider a scenario with $K=8$ users $\{1,2,\dots,8\}$ assisted by $\Lambda=4$ helper caches, each of size $M=4$ units of file, storing content from a library of $N=8$ equally-sized files $W^1,W^2,\dots,W^8$.

In the cache placement phase, each file $W^n$ is first split into $6$ equally-sized subfiles $W^n_{1,2},W^n_{1,3},W^n_{1,4},W^n_{2,3},W^n_{2,4},W^n_{3,4}$, and then each cache $\lambda$ stores $W^n_{\Tau}:\Tau\ni\lambda,\forall n\in [8]$, where for example cache $1$ stores subfiles $W^n_{1,2},W^n_{1,3},W^n_{1,4}$.

In the subsequent cache assignment, users $\Uc_1=\{1,2,3\}$, $\Uc_2=\{4,5\}$, $\Uc_3=\{6,7\}$ and $\Uc_4=\{8\}$ are assigned to caches $1$, $2$, $3$ and $4$, respectively. As we see, this association has a profile of the form $\Lc=(3,2,2,1)$. We will assume without loss of generality a standard worst-case demand vector $\dv=(1,2,3,4,5,6,7,8)$.

Delivery takes place in $|\Uc_1|=3$ consecutive rounds, with each round respectively serving the following sets of users
\begin{align*}
&\mathcal{R}_{1}=\{1,4,6,8\}\\
&\mathcal{R}_{2}=\{2,5,7\}\\
&\mathcal{R}_{3}=\{3\}.
\end{align*}

In the first round, the server transmits the following messages
\begin{align*}
&x_{\{1,4,6\}}=W^1_{2,3}\oplus W^4_{1,3}\oplus W^6_{1,2}\\
&x_{\{1,4,8\}}=W^1_{2,4}\oplus W^4_{1,4}\oplus W^8_{1,2}\\
&x_{\{1,6,8\}}=W^1_{3,4}\oplus W^6_{1,4}\oplus W^8_{1,3}\\
&x_{\{4,6,8\}}=W^4_{3,4}\oplus W^6_{2,4}\oplus W^8_{2,3}
\end{align*}
and then decoding is done as in~\cite{MN14}. For instance, user $1$, upon receiving $x_{\{1,4,6\}}$, can decode its desired $W^1_{2,3}$ since it can fetch $W^4_{1,3}$ and $W^6_{1,2}$ at zero cost from its associated helper cache $1$. A similar procedure is applied by users $4$ and $6$ for the first transmission, as well as for the 3 subsequent XOR-messages.
In the second round, we have the following set of transmissions
\begin{align*}
x_{\{2,5,7\}}&=W^2_{2,3}\oplus W^5_{1,3}\oplus W^7_{1,2}\\
x_{\{2,5\}}&=W^2_{2,4}\oplus W^5_{1,4}\\
x_{\{2,7\}}&=W^2_{3,4}\oplus W^7_{1,4}\\
x_{\{5,7\}}&=W^5_{3,4}\oplus W^7_{2,4}\notag
\end{align*}
while in the last round, the server serves user $3$ with three consecutive unicast transmissions
$$x_{\{3\}}=W^3_{2,3}||W^3_{2,4}||W^3_{3,4}.
$$
Adding the delay for the above 11 transmissions, tells us that the overall normalized delivery time required to serve all the users is $T=\frac{11}{6}$. It is very easy to see that this delay remains the same --- given again worst-case demand vectors --- for any user-to-cache association $\mathcal{U}$ with the same profile $\Lc=(3,2,2,1)$. Every time, this delay matches the converse $$T^*(3,2,2,1)\geq \frac{\sum_{r=1}^{2}\mathcal{L}_r {4-r \choose 2}}{{4 \choose 2}}=\frac{11}{6}$$ of Theorem \ref{thm:PerClassSingleAntenna}\footnote{Note that optimality is maintained despite the fact that the server has transmitted seemingly inefficient messages that involved less than $\Lambda\gamma+1=3$ users.}. 

\section{Conclusions and final remarks\label{sec:discussion}}
The work is among the first to employ index coding as a means of providing (in this case, exact) outer bounds for more involved cache-aided network topologies that better capture aspects of larger cache-aided networks, such as having shared caches and a variety of user-to-cache association profiles. Dealing with such non uniform profiles, raises interesting challenges in redesigning converse bounds as well as redesigning coded caching which is known to generally thrive on symmetry. The result can also be useful in providing guiding principles on how to assign shared caches to different users. Finally we believe that the adaptation of the outer bound technique to non-uniform settings may also be useful in analyzing different applications like distributed computing \cite{LiAliAvestimehrComputIT18,PLE:18a,KonstantinidisRamamoorthyArxiv18,YanYangWiggerArxiv18,MingyueJiISIT18} or data shuffling \cite{AttiaTandon16,AttiaTandonISIT18,WanTuninettiShuffling18,MohajerISIT18} which can naturally entail such non uniformities.

\bibliography{final_refs2}
\bibliographystyle{IEEEtran}
\end{document}